\newcommand{\bs}[1]{\boldsymbol{#1}}
\newcommand{\FF}{\mathbb{F}}
\newcommand{\Fq}{\mathbb{F}_q}
\newcommand{\Fqm}{\mathbb{F}_{q^{m}}}
\renewcommand{\H}{\mathbf{H}}
\newcommand{\G}{\mathbf{G}}
\newcommand{\B}{\mathbf{B}}
\newcommand{\A}{\mathbf{A}}
\renewcommand{\r}{\mathbf{r}}
\renewcommand{\c}{\mathbf{c}}
\newcommand{\h}{\mathbf{h}}
\newcommand{\e}{\mathbf{e}}
\renewcommand{\u}{\mathbf{u}}
\newcommand{\s}{\mathbf{s}}
\renewcommand{\i}{\mathbf{i}}
\renewcommand{\b}{\mathbf{b}}
\newcommand{\Cc}{\mathcal{C}}
\newcommand{\Zz}{\mathcal{Z}}
\newcommand{\Ll}{\bs{\mathcal{L}}}
\newtheorem{theorem}{Theorem}
\newtheorem{lemma}[theorem]{Lemma}
\newtheorem{definition}[theorem]{Definition}
\def\mkfancyprefix#1#2{%
\expandafter\def\csname fancyref#1labelprefix\endcsname{#1}%
\begingroup\def\x{\endgroup\frefformat{plain}}%
    \expandafter\x\csname fancyref#1labelprefix\endcsname
    {\MakeLowercase{#2}\fancyrefdefaultspacing##1}%
\begingroup\def\x{\endgroup\Frefformat{plain}}%
    \expandafter\x\csname fancyref#1labelprefix\endcsname
    {#2\fancyrefdefaultspacing##1}%
\begingroup\def\x{\endgroup\frefformat{vario}}%
    \expandafter\x\csname fancyref#1labelprefix\endcsname
    {\MakeLowercase{#2}\fancyrefdefaultspacing##1##3}%
\begingroup\def\x{\endgroup\Frefformat{vario}}%
    \expandafter\x\csname fancyref#1labelprefix\endcsname
    {#2\fancyrefdefaultspacing##1##3}%
}
\fancyrefchangeprefix{\fancyrefeqlabelprefix}{eqn}
\newcommand{\cref}[1]{\Fref{#1}}
\newcommand{\removelatexerror}{\let\@latex@error\@gobble}
\newcommand{\printalgoIEEE}[1]
{{\centering
\scalebox{0.97}{
\removelatexerror
\begin{tabular}{p{\columnwidth}}
\begin{algorithm}[H]
 #1
\end{algorithm}
\end{tabular}
}
}
}
\renewcommand{\vec}[1]{\ensuremath{\mathbf{#1}}}
\renewcommand{\r}{\vec{r}}
\renewcommand{\c}{\vec{c}}
\renewcommand{\e}{\vec{e}}
\begin{document}
\title{A New Error Correction Scheme for \\ Physical Unclonable Functions}

\author{\IEEEauthorblockN{Sven Müelich\IEEEauthorrefmark{1},
Martin Bossert\IEEEauthorrefmark{1}}
\IEEEauthorblockA{\IEEEauthorrefmark{1}
              Ulm University, 
              Institute of Communications Engineering, 
              89081 Ulm, Germany\\
              Email: \{sven.mueelich, martin.bossert\}@uni-ulm.de}

}

\maketitle

\begin{abstract}
Error correction is an indispensable component when Physical Unclonable Functions (PUFs) are used in cryptographic applications. 
So far, there exist schemes that obtain helper data, which they need within the error correction process. 
We introduce a new scheme, which only uses an error correcting code without any further helper data.
The main idea is to construct for each PUF instance an individual code which contains the initial PUF response as codeword.
In this work we use LDPC codes, however other code classes are also possible.
Our scheme allows a trade-off between code rate and cryptographic security.
In addition, decoding with linear complexity is possible.
\end{abstract}

\begin{IEEEkeywords}
Physical Unclonable Functions, Secure Sketch, Helper Data Generation, Low-Density Parity-Check Codes, Cryptographic Key Generation and Storage
\end{IEEEkeywords}

\section{Motivation}
\emph{Physical Unclonable Functions} (PUFs) can be used for cryptographic purposes like identification, authentication, key generation and key storage. 
Using PUFs, keys do not have to be stored, but can be reproduced when needed.
However, usually some errors occur during the key reproduction process. 
Using error correction, the original key can be recovered. 
In order to perform error correction, schemes of so-called \emph{Secure Sketches} which use an error correcting code together with helper data were proposed in \cite{linnartz2003new} and \cite{dodis2004fuzzy}.
In this work, we suggest a new scheme, which only uses the code without any further helper data:
Section~\ref{sec:fundamentals} gives a more detailed description of PUFs and Low-Density Parity-Check (LDPC) codes. 
Section~\ref{sec:sketches} summarizes known Secure Sketch schemes.
In Section~\ref{sec:ldpc-scheme}, we propose our new scheme.
Section~\ref{sec:Examples} provides examples.
Finally, Section~\ref{sec:Conclusion} concludes the paper.

Notation: Let $\Cc(n,k,d)$ be a linear block code of length $n$, dimension $k$, and minimum distance $d$. 
Further, $\mathsf{wt}_H(\c)$ denotes the Hamming weight of vector $\c$.
The Hamming distance of two vectors $\c_1,\c_2$ is denoted by $\mathsf{d}_H(\c_1,\c_2)$.

\section{Fundamentals}
\label{sec:fundamentals}

\subsection{Physical Unclonable Functions}
A \emph{Physical Unclonable Function} (PUF) is a physical object which, according to an input (binary string, called challenge), produces an output (binary string, called response).
Since the calculation of the responses is based on randomness, which is intrinsic to the object  due to technical and physical limitations within the manufacturing process, devices which are identical in construction produce different, unique responses for the same inputs.
Various PUF constructions have been proposed. 
Most often, their randomness is either based on delays in electronic circuits or on the initialization behavior of memory cells. 
These devices are unclonable, since it is impossible to produce a device with a specific challenge-response behavior. 
Uniqueness and unclonability are properties which make PUFs suitable to use them for cryptographic applications. 
For example, a PUF response can be used as a cryptographic key. 
Since the randomness is static over the devices' lifetime, instead of storing the key in a non-volatile memory (and thereby making the system vulnerable to physical attacks), the response can be simply reproduced when the key is needed. 
However, there are two major problems: 
First, the responses of PUFs are not perfectly reproducible, since there is a variance caused by environmental factors (e.g. temperature, supply voltage, aging). 
Second, responses are not uniformly distributed. 
To tackle the first problem, error correcting codes can be applied. 
The non-uniformity can be solved by using cryptographic hash-functions. 
In this paper, we focus on the problem of nonperfect reproducible responses. 
For a comprehensive overview on PUFs, we refer to the literature \cite{boehm2012puf,maes2013puf,wachsmann2015puf}.

\subsection{LDPC Codes}
\emph{Low-Density Parity-Check (LDPC) codes} are widely used binary linear block codes, introduced by Gallager in \cite{gallager1962low}. 

\begin{definition}
\label{def:ldpc}
A $(\rho,\gamma)$-regular LDPC code of length $n$ and dimension $k$ is defined by a low-density \footnote{The density of a matrix is $\frac{\#\text{non-zero entries}}{\#\text{rows}\cdot \#\text{columns}}$} $(n-k) \times n$ parity check matrix $\H$ with the following properties:
\begin{itemize}
\item the number of ones in each row is $\rho \ll n$
\item the number of ones in each column is $\gamma \ll (n-k)$
\item for any two columns $c_j$, $c_{j'}$, there is at most one row $r_i$ such that $r_{ij} = r_{ij'} = 1$. The same holds for rows.\footnote{This property is not included in all definitions of LDPC codes. However, since the constructions based on finite geometries result in this property, we decided to add it to the definition.}
\end{itemize}
If the number of ones in each row or column is not constant, then the defined code is called irregular LDPC code.
\end{definition}
 
The number of ones in the parity check matrix of a $(\rho,\gamma)$-regular LDPC code is $\gamma\cdot n = \rho\cdot(n-k) $, the density is $\frac{\rho}{n} = \frac{\gamma}{n-k}$, and the coderate is $R \geq 1-\frac{\gamma}{\rho}$.

There exist many methods to construct LDPC codes.
Within this work we use constructions based on finite geometries and Reed-Solomon codes. 
For details on LDPC codes, we refer to standard textbooks on coding theory, e.g. \cite{bossert1999channel,kabatiansky2005error}.

\subsection{Finite geometries}
\label{subsec:geometry}
In \cite{kou2000low}, the construction of regular LDPC codes using finite geometries was proposed. We define Euclidean and projective geometries and explain how to derive LDPC codes.

\begin{definition}
\label{def:euclidean}
A Euclidean geometry $EG(m,q)$ consists of points $\{p_1,\dots,p_n\}$  where $p_i \in \Fqm$ ($n = q^m$), and lines $\{\Ll_1,\dots,\Ll_L\}$, such that the following axioms hold:
\begin{enumerate}
\item There is exactly one line between any points $p_i$ and $p_j$.
\item Two lines $\Ll_i$ and $\Ll_j$ either intersect in exactly one point or are parallel.
\item There are three points which are not on the same line. 
\item For each point $p_i \notin \Ll_j$, $\exists \Ll_l$ parallel to $\Ll_j$. 
\end{enumerate}
\end{definition}

As derived in \cite{bossert1999channel} and \cite{kabatiansky2005error}, $m$ and $q$ can be used to calculate
\begin{itemize}
	\item the number of points on each line
	\begin{align}
	\label{eqn:rhoeg}
	\rho = q,
	\end{align}
	\item the number of lines
	\begin{align}
	\label{eqn:Leg}
	L = \frac{q^{m-1}\cdot(q^m - 1)}{q-1},
	\end{align}
	\item and the number of lines which intersect in each point
	\begin{align}
	\label{eqn:gammaeg}
	\gamma = \frac{q^m -1}{q-1}.
	\end{align}
\end{itemize}

There are $\frac{q^m - 1}{q-1}$ different lines through the origin, namely
\begin{align}
	\Ll_j = \{\alpha a_j : \alpha \in \Fq, a_j \in \Fqm\setminus\{0\} \}.
\end{align}
For each line through the origin, there exist the parallel lines
\begin{align}
	\Ll_{i,j} =  \{a_i + \alpha a_j : \alpha \in \Fq, a_i \neq \alpha a_j\}.
\end{align}

We can derive an $L \times n$ matrix $\H$ from a geometry constructed using Definition~\ref{def:euclidean}, where each line of the geometry is a matrix row and each point is a column. 
The entries of that matrix are
\begin{align}
\label{eqn:paritycheck}
   h_{ij} =
   \begin{cases}
   1, & \text{if } p_j \in \Ll_i \\
   0, & \text{otherwise.}
   \end{cases}
\end{align}
In each row $\Ll_i$ there are $\rho$ ones which represent the points on line $\Ll_i$. 
In each column $p_j$ there are $\gamma$ ones which represent the lines intersecting in point $p_j$. 
Note that due to Definition~\ref{def:euclidean} Axiom~1, any two columns of $\H$ have at most one position where they both have a one as entry. 
Due to Definition~\ref{def:euclidean} Axiom~2, there is at most one position in which any two rows have entry one. 
We obtain a matrix of low density which fulfills all properties given in Definition~\ref{def:ldpc}. 
Hence, it can be used as parity check matrix of a ($\rho,\gamma$)-regular LDPC code.

We give an example to illustrate the construction. 
We choose $q=m=2$ and obtain the Euclidean geometry $EG(2,2)$ visualized in Figure~\ref{fig:eg} (left). Using $m$ and $q$ we calculate $\rho = 2$, $L = 6$, and $\gamma = 3$ according to Equations~(\ref{eqn:rhoeg})-(\ref{eqn:gammaeg}). Using this geometry, we apply Equation~(\ref{eqn:paritycheck}) to derive
\begin{align*}
\H = \begin{pmatrix}
1~1~0~0 \\
1~0~1~0 \\
1~0~0~1 \\
0~1~1~0 \\
0~1~0~1 \\
0~0~1~1 \\
\end{pmatrix},
\end{align*}
which fulfills all the properties given in Definition~\ref{def:ldpc} due to the construction of the geometry and hence can be used as parity check matrix of an LDPC code. 
To define an LDPC code of length $(n-k)$, the transposed of $\H$ can be used.

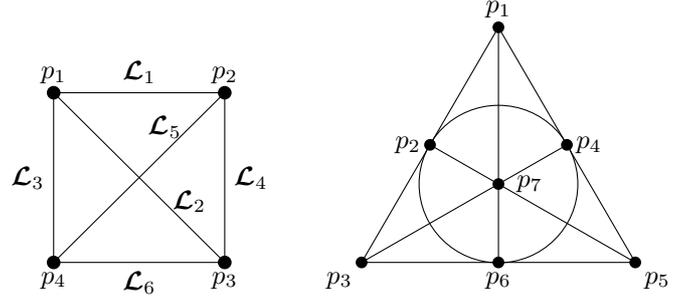
\begin{figure}
	\centering
\begin{tikzpicture}[scale=0.9]
\fill (-4.5,0) circle (0.1);
\fill (-4.5,2.5) circle (0.1);
\fill (-2,0) circle (0.1);
\fill (-2,2.5) circle (0.1);
\draw (-4.5,0) -- (-4.5, 2.5);
\draw (-4.5,0) -- (-2, 0);
\draw (-4.5,0) -- (-2, 2.5);
\draw (-2,0) -- (-2, 2.5);
\draw (-4.5,2.5) -- (-2, 2.5);
\draw (-2,0) -- (-4.5, 2.5);
\draw (-4.5,2.5) node [above] {$p_1$};
\draw (-2,2.5) node [above] {$p_2$};
\draw (-2,0) node [below] {$p_3$};
\draw (-4.5,0) node [below] {$p_4$};
\draw (-3.25,2.5) node [above] {$\Ll_1$};
\draw (-3.25,0) node [below] {$\Ll_6$};
\draw (-4.5,1.25) node [left] {$\Ll_3$};
\draw (-2,1.25) node [right] {$\Ll_4$};
\draw (-2.5,2) node [left] {$\Ll_5$};
\draw (-2.9,0.9) node [right] {$\Ll_2$};

\tikzset{mypoints/.style={fill=black,draw=black,thick}} 
\draw (0,0) coordinate[label=below left:$p_3$] (3) -- (60:4cm) coordinate[label=above:$p_1$] (1);
\draw (0,0) -- (4cm,0) coordinate[label=below right:$p_5$] (5);
\draw (5) -- (60:4cm);
\fill[mypoints] (3) circle (2pt);
\fill[mypoints] (1) circle (2pt);
\fill[mypoints] (5) circle (2pt);
\path (5) -- (120:4cm) coordinate (2out);
\coordinate[label=left:$p_2$] (2) at (intersection of 3--1 and 5--2out);
\fill[mypoints] (2) circle (2pt);
\path (0,0) -- (30:4cm) coordinate (4out);
\coordinate[label=right:$p_4$] (4) at (intersection of 5--1 and 3--4out);
\fill[mypoints] (4) circle (2pt);
\draw (3) -- (4);
\draw (5) -- (2);
\draw (1) -- (2.0cm,0) coordinate[label=below:$p_6$] (6);
\fill[mypoints] (2,0) circle (2pt);
\coordinate[label=right: $\,\,p_7$] (7) at (intersection of 3--4 and 2--5);
\fill[mypoints] (7) circle (2pt);
\draw (intersection of 3--4 and 2--5) circle (33pt);

\end{tikzpicture}
\caption{Euclidean geometry EG(2,2) constructed according to Definition~\ref{def:euclidean} (left), projective geometry PG(2,2) according to Definition~\ref{def:projective} (right).}
\label{fig:eg}
\end{figure}

Also projective geometry can be used. Each point in a projective geometry is described by a set of vectors instead of a field element. In contrast to Euclidean geometry, zero vector and parallel lines do not exist.

\begin{definition}
\label{def:projective}
A projective geometry $PG(m,q)$ consists of points ${p_1,\dots,p_n}$ $\big{(}n = \frac{q^{m+1}-1}{q-1}\big{)}$ and Lines $\Ll$. Each point is defined to be the set of vectors $\{\alpha(\beta_0,\beta_1,\beta_m) : \beta_j \in \Fq, \alpha \in \Fq \setminus \{0\}, (\beta_0,\beta_1,\beta_m) \neq (0,0,\dots,0)\}$.
Lines are a set of points which are defined according to the following axioms:
\begin{enumerate}
	\item There is exactly one line between any points $p_i$ and $p_j$.
	\item There are four points which are not on the same line.
	\item Two lines $\Ll_i$ and $\Ll_j$ intersect in exactly one point.
\end{enumerate}
\end{definition}

As derived in \cite{bossert1999channel}, $m$ and $q$ can be used in order to calculate
\begin{itemize}
	\item the number of points on each line
	\begin{align}
		\label{eqn:rhopg}
		\rho = q + 1,
	\end{align}
	\item the number of lines
	\begin{align}
		\label{eqn:Lpg}
		L = \frac{(q^{m+1}-1)(q^m -1)}{(q-1)^2 (q+1)},
	\end{align}
	\item and the number of lines which intersect in each point
	\begin{align}
		\label{eqn:gammapg}
		\gamma = \frac{q^m -1}{q-1}.
	\end{align}
\end{itemize}

A line through two points $a_i$ and $a_j$ is defined as
\begin{align}
	\Ll_{i,j} =  \{\alpha_i a_i + \alpha_j a_j : \alpha_i, \alpha_j \in \Fq, (\alpha_i,\alpha_j) \neq (0,0)\}.
\end{align}
Note that the points $a_i$ and $a_j$ are both a set of vectors according to Definition~\ref{def:projective}.
Figure~\ref{fig:eg} (right) visualizes PG(2,2).
The parity check matrix $\H$ of an LDPC code can be derived analog as in the Euclidean geometry case. 

\subsection{LDPC codes based on Reed-Solomon codes}
\label{subsec:rs}

A construction of LDPC codes based on Reed-Solomon (RS) codes was introduced in \cite{djurdjevic2003class}.
For a parameter $\rho$ ($1\leq \rho < q$) we construct a Reed-Solomon code $\Cc(q-1,q-\rho+1,\rho-1)$ over $\Fq$.
Shortening $\Cc$ by deleting the first $q-\rho-1$ information symbols results in code $\Cc_b(\rho,2,\rho-1)$. 
We choose a $\c \in \Cc_b$ with $\mathsf{wt}_H(\c)=\rho$ and generate the set $\Cc_b^{(1)} = \{\beta\cdot\c : \beta \in \Fq \}$. 
Based on this set, $\Cc_b$ can be partitioned into $q$ cosets $\Cc_b^{(1)},\dots,\Cc_b^{(q)}$.
For a vector $\c = (c_1,\dots,c_{\rho})$ we define the operation $z(\c) = (z(c_1),\dots,z(c_{\rho}))$, where each $z(c_j)$ is a sparse vector of length $q$ which has a one only at position $j$ when $c_j = \alpha_j$ for a generator $\alpha$ of $\Fq$. 
Using this operation we transform all sets $\Cc_b^{(i)}$  ($i = 1,\dots,q$) into $\Zz(\Cc_b^{(i)}) = \{z(\c): \c \in \Cc_b^{(i)}\}$.
Hence, each codeword is transformed to a vector of $\rho$ tuples, each having length $q$.
These vectors are used to construct matrices $\A_i \in \FF_2^{q \times \rho q}$ whose rows are the codewords of  $\Cc_b^{(i)}$ in transformed representation.
For $1 \leq \gamma \leq q$ we define 
\begin{align}
\H = 
\begin{pmatrix}
\A_1, \hdots,\A_\gamma
\end{pmatrix}^T.
\end{align}
$\H$ fulfills the properties of regular LDPC codes and hence can be used as parity check matrix.

\section{Known Secure Sketch Constructions}
\label{sec:sketches}

Secure Sketches were introduced in \cite{linnartz2003new} and \cite{dodis2004fuzzy} and address the problem that PUF responses are not perfectly reproducible. 
Figure~\ref{fig:sketch} visualizes a generic Secure Sketch which consists of an \emph{initialization phase} and a \emph{reproduction phase}.
During initialization, an initial response $\r_I$ is generated by the PUF (Figure~\ref{fig:sketch}~(a)). 
The \emph{Helper Data Generation} unit creates helper data (Figure~\ref{fig:sketch}~(b)), which are stored in the \emph{Helper Data Storage} for later usage in the reproduction phase.
This storage does not have to be protected, since the procedure is designed such that knowing the helper data does not reveal information about $\r_I$ that can be exploited by an adversary. 
Usually, initialization is performed in a secure environment.

In the reproduction phase, the initial response can be recovered, using a new generated response $\r$ and the helper data.
If the PUF produces $\r$ (Figure~\ref{fig:sketch}~(c)), a decoding algorithm in the \emph{Key Reproduction} unit is used, which reproduces $\r_I$ using $\r$ and the helper data, when $\mathsf{d}_H(\r_I,\r)$ lies within the error correction capability of the used code. 
Finally $\r_I$ is hashed to the actual key.

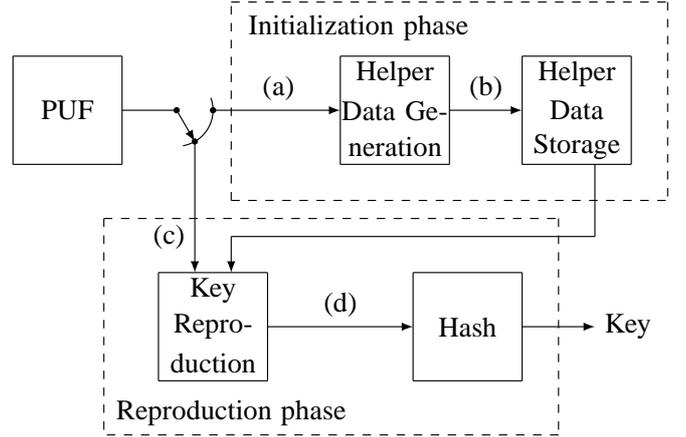
\begin{figure}[h]
	{
\resizebox{0.49\textwidth}{!}{
\centering
\begin{tikzpicture}[scale=0.45]

\draw (0,0) -- (0,3);
\draw (0,3) -- (3,3);
\draw (3,3) -- (3,0);
\draw (3,0) -- (0,0);
\draw (1.5,1.5) node {PUF};

\draw (3,1.5) -- (4.5,1.5);
\draw[->,>=latex] (5,1.5-0.86602540378) -- (5,-3);
\draw[fill] (4.5,1.5) circle (2pt);
\draw[fill] (5.5,1.5) circle (2pt);
\draw[fill] (5,1.5-0.86602540378) circle (2pt);
\draw (5,-2) node [left] {(c)};
\draw [domain=-80:20] plot ({4.5+cos(\x)}, {1.5+sin(\x)});
\draw[->,>=latex] (4.5,1.5) -- (5,1.5-0.86602540378);

\draw (9,0) -- (9,3);
\draw (9,3) -- (12,3);
\draw (12,3) -- (12,0);
\draw (12,0) -- (9,0);
\draw (10.5,2.5) node {Helper};
\draw (10.5,1.5) node {Data Ge-};
\draw (10.5,0.5) node {neration};

\draw[->,>=latex] (5.5,1.5) -- (9,1.5);
\draw[->,>=latex] (12,1.5) -- (14,1.5);
\draw (7.3,1.5) node [above] {(a)};
\draw (13,1.5) node [above] {(b)};

\draw (14,0) -- (14,3);
\draw (14,3) -- (17,3);
\draw (17,3) -- (17,0);
\draw (17,0) -- (14,0);
\draw (15.5,2.5) node {Helper};
\draw (15.5,1.5) node {Data};
\draw (15.5,0.5) node {Storage};

\draw (16,0) -- (16,-2);
\draw (16,-2) -- (6,-2);
\draw[->,>=latex] (6,-2) -- (6,-3);

\draw (4,-3) -- (7,-3);
\draw (7,-3) -- (7,-6);
\draw (7,-6) -- (4,-6);
\draw (4,-6) -- (4,-3);
\draw (5.5,-3.5) node {Key};
\draw (5.5,-4.5) node {Repro-};
\draw (5.5,-5.5) node {duction};

\draw[->,>=latex] (7,-4.5) -- (11,-4.5);
\draw (9,-4.5) node [above] {(d)};

\draw (11,-3) -- (14,-3);
\draw (14,-3) -- (14,-6);
\draw (14,-6) -- (11,-6);
\draw (11,-6) -- (11,-3);
\draw (12.5,-4.5) node {Hash};

\draw[->,>=latex] (14,-4.5) -- (16,-4.5);
\draw (16,-4.5) node [right] {Key};

\draw [dashed] (6,-1) -- (6,4.5);
\draw [dashed] (6,4.5) -- (18,4.5);
\draw [dashed] (18,4.5) -- (18,-1);
\draw [dashed] (6,-1) -- (18,-1);
\draw (9.5,4.4) node [below] {Initialization phase};

\draw [dashed] (2.5,-1.5) -- (2.5,-7.5);
\draw [dashed] (2.5,-1.5) -- (15,-1.5);
\draw [dashed] (15,-1.5) -- (15,-7.5);
\draw [dashed] (15,-7.5) -- (2.5,-7.5);
\draw (6,-7.5) node [above] {Reproduction phase};

\end{tikzpicture}
}
}
	\caption{Model of a Secure Sketch used for error correction for PUFs.}
	\label{fig:sketch}
\end{figure}

Two specific schemes, the \emph{Code-Offset Construction} and the {Syndrome Construction}, have been suggested in \cite{dodis2004fuzzy} and are used until now, e.g., in \cite{maes2012pufky}, \cite{muelich2014error}, and \cite{puchinger2015error}. We briefly explain these schemes in Sections~\ref{subsec:code-offset} and~\ref{subsec:syndrome}, before we introduce our new LDPC code based scheme in Section~\ref{sec:ldpc-scheme}.


\subsection{Code-Offset Construction}
\label{subsec:code-offset}
During the initialization phase, a randomly selected codeword $\c$ of a chosen code is added to the initial response $\r_I$ and the result $\h = \r_I \oplus \c$ is stored as helper data. 
After a new response $\r$ is generated in the reproduction phase,
\begin{eqnarray*}
	\c' &=& \r \oplus \h \\
	&=& \r \oplus (\r_I \oplus \c) \\
	&=& (\underbrace{\r \oplus \r_I}_{=\e}) \oplus \c 
\end{eqnarray*}
 is calculated using response $\r$ and helper data.
Since $\r_I$ and $\r$ have a small distance, $\r \oplus \r_I$ can be interpreted as error $\e$.
Hence, $\c'$ can be interpreted as received word and can be decoded. 
If $\mathsf{d}_H(\r_I,\r)$ is small enough, $\c$ and the decoded version of $\c'$ are the same and the initial response can be recovered by calculating $\r_I = \c \oplus \h$.

\subsection{Syndrome Construction}
\label{subsec:syndrome}
Using the syndrome construction, the helper data is the syndrome generated from initial response $\r_I$ and parity check matrix of a chosen code, i.e., $\mathbf{h} = \r_{I}^{T} \H$. 
If the PUF is evaluated in the reproduction phase, the calculation 
\begin{eqnarray*}
	\mathbf{Hr}^T &=& \mathbf{H}(\mathbf{r}_I^T \oplus \mathbf{e}^T) \\
	&=& \mathbf{Hr}_I^T \oplus \mathbf{He}^T \\
	&=& \mathbf{h}^T \oplus \mathbf{He}^T
\end{eqnarray*}
is performed.
After computing $\mathbf{Hr}^T \oplus \mathbf{h}^T$ the syndrome $\mathbf{He}^T$ remains and we can apply decoding.

\section{A new Secure Sketch Scheme}
\label{sec:ldpc-scheme}
In contrast to the schemes discussed in Section~\ref{sec:sketches}, our approach only needs the code and no further helper data.
The main idea of our construction is to design a code $\Cc$ for each PUF instance, such that the initial response $\r_I$, which we want to use as key, is a codeword. 

\subsection{Construction}
In the initialization phase, an individual code has to be generated for each PUF instance.
Let $\r_I$ be the initial response of length $n$. 
We choose one of the methods discussed in Sections~\ref{subsec:geometry} and \ref{subsec:rs} in order to generate a parity check matrix of an LDPC code of length~$n$ (Figure~\ref{fig:construction} (a)).
Note that there is no restriction on these LDPC code construction methods, others are also possible.
To enforce $\r_I$ to be codeword of the constructed code, we only keep rows which are dual to $\r_I$ (Figure~\ref{fig:construction} (b), bold rows), since the rows of the parity check matrix are codewords from the dual code. 
The other rows we ignore (Figure~\ref{fig:construction} (c)).
Therefore, we choose all lines $\Ll_i$ with $\langle\Ll_i,\r_I\rangle = 0$ to be rows of our parity check matrix~$\H$.
In order to get a memory efficient implementation, we do not need to generate the full codes before choosing the dual vectors. 
The non-dual vectors can be directly dropped during the code construction process.

However, due to the elimination of rows which are not dual to $\r_I$, the error correction capability of the constructed code becomes decreased in comparison to the full code. 
To increase the error correction capability again, we have to add more rows which are dual to $\r_I$ to get more parity check equations  (Figure~\ref{fig:construction} (d)). 
We can obtain these new equations using the remaining construction methods in order to generate more LDPC codes with the same parameters.

\begin{figure}[hbt]
	\centering
	\begin{tikzpicture}[scale=0.9]
	\draw (1.5,2.1) node {(a)};
	\draw (0,1.9) -- (3, 1.9); 
	\draw (0,1.8) -- (3, 1.8);
	\draw (0,1.7) -- (3, 1.7);
	\draw (0,1.6) -- (3, 1.6);
	\draw (0,1.5) -- (3, 1.5);
	\draw (0,1.4) -- (3, 1.4);
	\draw (0,1.3) -- (3, 1.3);
	\draw (0,1.2) -- (3, 1.2);
	\draw (0,1.1) -- (3, 1.1);
	\draw (0,1) -- (3, 1);
	\draw (0,0.9) -- (3, 0.9);
	\draw (0,0.8) -- (3, 0.8);
	\draw (0,0.7) -- (3, 0.7);
	\draw (0,0.6) -- (3, 0.6);
	\draw (0,0.5) -- (3, 0.5);
	\draw (0,0.4) -- (3, 0.4);
	\draw (0,0.3) -- (3, 0.3);
	\draw (0,0.2) -- (3, 0.2);
	\draw (0,0.1) -- (3, 0.1);
	\draw (0,0) -- (3, 0);
	
	\draw (5.5,2.1) node {(b)};
	\draw[line width=0.6mm] (4,1.9) -- (7, 1.9);  
	\draw (4,1.8) -- (7, 1.8);
	\draw (4,1.7) -- (7, 1.7);
	\draw[line width=0.6mm] (4,1.6) -- (7, 1.6);
	\draw[line width=0.6mm] (4,1.5) -- (7, 1.5);
	\draw (4,1.4) -- (7, 1.4);
	\draw (4,1.3) -- (7, 1.3);
	\draw (4,1.2) -- (7, 1.2);
	\draw[line width=0.6mm] (4,1.1) -- (7, 1.1);
	\draw(4,1) -- (7, 1);
	\draw[line width=0.6mm] (4,0.9) -- (7, 0.9);
	\draw (4,0.8) -- (7, 0.8);
	\draw (4,0.7) -- (7, 0.7);
	\draw[line width=0.6mm] (4,0.6) -- (7, 0.6);
	\draw (4,0.5) -- (7, 0.5);
	\draw[line width=0.6mm] (4,0.4) -- (7, 0.4);
	\draw (4,0.3) -- (7, 0.3);
	\draw[line width=0.6mm] (4,0.2) -- (7, 0.2);
	\draw[line width=0.6mm] (4,0.1) -- (7, 0.1);
	\draw (4,0) -- (7, 0);
	
	\draw (1.5,-0.8) node {(c)};
	\draw[line width=0.6mm] (0,-1.0) -- (3, -1.0);  
	\draw[line width=0.6mm] (0,-1.3) -- (3, -1.3);
	\draw[line width=0.6mm] (0,-1.4) -- (3, -1.4);
	\draw[line width=0.6mm] (0,-2.0) -- (3, -2.0);
	\draw[line width=0.6mm] (0,-2.1) -- (3, -2.1);
	\draw[line width=0.6mm] (0,-2.4) -- (3, -2.4);
	\draw[line width=0.6mm] (0,-2.6) -- (3, -2.6);
	\draw[line width=0.6mm] (0,-2.8) -- (3, -2.8);
	\draw[line width=0.6mm] (0,-2.9) -- (3, -2.9);

	\draw (5.5,-0.8) node {(d)};
	\draw[line width=0.6mm] (4,-1.0) -- (7, -1.0);  
	\draw[line width=0.6mm] (4,-1.2) -- (7, -1.2);
	\draw[line width=0.6mm] (4,-1.4) -- (7, -1.4);
	\draw[line width=0.6mm] (4,-1.6) -- (7, -1.6);
	\draw[line width=0.6mm] (4,-1.8) -- (7, -1.8);
	\draw[line width=0.6mm] (4,-2.0) -- (7, -2.0);
	\draw[line width=0.6mm] (4,-2.2) -- (7, -2.2);
	\draw[line width=0.6mm] (4,-2.4) -- (7, -2.4);
	\draw[line width=0.6mm] (4,-2.6) -- (7, -2.6);
	\draw[line width=0.6mm] (4,-2.6) -- (7, -2.6);
	\draw[line width=0.6mm] (4,-2.8) -- (7, -2.8);
	\draw[line width=0.6mm] (4,-3.0) -- (7, -3.0);
	\end{tikzpicture}
	\caption{Construction process of an LDPC code with a particular codeword. (a) Rows of a parity check matrix of a full LDPC code, obtained by using one of the construction methods given in Sections~\ref{subsec:geometry} and \ref{subsec:rs}. (b) Choose rows which are dual to the initial response $\r_I$. (c) Ignore all other rows. (d) Construct other rows which are dual to $\r_I$ in order to obtain good error correction properties.}
	\label{fig:construction}
\end{figure}
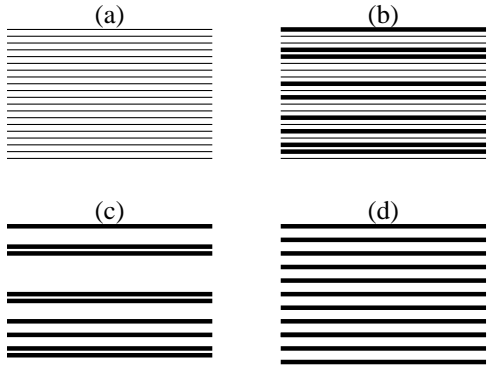

Adding new dual rows will increase the rank of the constructed matrix.
If we have reached the desired rank (remember that $k$ depends on the rank, since $\mathsf{rank}(\H) = n-k$) and still need more parity check equations to result in a good error correction performance, we add linear combinations of existing rows.
Doing this, we have to assure that the row weight of the new vectors does not become too large.

Since the combination of parity check equations from different code constructions might destroy the regularity of our constructed code, we create additional parity check equations whose entries increase the weight of low weight columns in order to guarantee that the column weight is roughly constant.
This property proved to be beneficial for the convergence behavior of our decoding algorithm (cf. Section~\ref{subsec:decoding}). 

When the PUF reproduces the key in the reproduction phase, the newly generated response $\r$ is interpreted as received word $\r = \r_i + \e$ and hence given to the decoding algorithm. 
The decoder outputs $\r_I$ when $\mathsf{d}_H(\r_I, \r)$ is small enough.

\subsection{Analysis}

We show that the constructed code is an LDPC code which has $\r_I$ as one of its codewords.
Let $\B_K = \{\b_i\}$ denote the rows of a parity check matrix obtained by a construction method $K \in \{EG,PG,RS\}$.
Let $\B = \{\b_j \in \B_K: \langle \b_j,\r_I\rangle = 0 \}$ be the set of rows which are dual to $\r_I$. Further, let L denote the cardinality of the set $\B$. 

\begin{lemma}
\label{lem:paritycheck}
$\H \subset \B$	is parity check matrix of an LDPC code $\Cc_L$, where $n-k = \mathrm{rank}(\B)$, with $\r_I \in \Cc_L$.
\end{lemma}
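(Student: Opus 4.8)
The plan is to verify, in turn, the three assertions packed into the statement: that $\H$ (the set $\B$, or a maximal linearly independent subset thereof) qualifies as the parity-check matrix of an LDPC code; that the resulting code $\Cc_L$ has dimension $k$ with $n-k=\rk(\B)$; and that $\r_I$ lies in $\Cc_L$. The last claim is the cleanest, so I would dispatch it first: by construction every row $\b_j\in\B$ satisfies $\langle\b_j,\r_I\rangle=0$, hence $\H\r_I^T=\vec 0$, so $\r_I$ is in the kernel of $\H$, which by definition is $\Cc_L$. Nothing more is needed there.

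Next I would handle the rank/dimension bookkeeping. The matrix whose rows are the elements of $\B$ has some rank $\rho(\B)=:n-k$; the code it defines is $\Cc_L=\{\c\in\FF_2^n : \H\c^T=\vec 0\}$, and a standard linear-algebra fact gives $\dim\Cc_L = n - \rk(\H) = n - (n-k) = k$. If one insists that a parity-check matrix have full row rank, I would simply pass to a maximal linearly independent subset of $\B$; this does not change the kernel, so $\Cc_L$ is unchanged, and it is now literally an $(n-k)\times n$ full-rank parity-check matrix. This step is routine once the set $\B$ is fixed.

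The substantive part is checking that $\H$ satisfies Definition~\ref{def:ldpc}: low density, bounded row weight $\rho\ll n$, bounded column weight $\gamma\ll(n-k)$, and the ``at most one common one'' property for any pair of columns and any pair of rows. Here I would argue that $\B$ is a \emph{subset} of the rows $\B_K$ of a parity-check matrix already established (in Sections~\ref{subsec:geometry} and~\ref{subsec:rs}) to be LDPC. Every row of $\H$ is a row of the original matrix, so its weight is still $\rho$ and still $\rho\ll n$; discarding rows can only decrease column weights, so each column weight is at most $\gamma\ll(n-k)$ (it need no longer be constant — but Definition~\ref{def:ldpc} already allows this under the name ``irregular LDPC code''); and the pairwise-intersection condition is inherited, because if two columns (resp.\ rows) of the full matrix share at most one position with a common one, the same is a fortiori true after deleting rows. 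Density can only drop. Hence $\H$ meets the definition, with the caveat that regularity may be lost.

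The main obstacle — really the only delicate point — is that the passage from the full geometric/RS construction to the row-subset $\B$ genuinely breaks regularity and the neat rank formulas: $\rk(\B)$ is no longer $n-k$ for the original $k$, column weights are no longer constant, and in pathological cases $\B$ could even be empty or rank-deficient in ways that make $\Cc_L$ trivial or all of $\FF_2^n$. I would therefore be careful to state $\Cc_L$ as the code \emph{defined by whatever $\B$ comes out to be}, set $n-k:=\rk(\B)$ by definition rather than by inheritance, and invoke the ``irregular LDPC'' clause of Definition~\ref{def:ldpc} so that the loss of constant column weight is not fatal to the claim. With those conventions fixed, each of the three bullet-properties of Definition~\ref{def:ldpc} is inherited from $\B_K$ by the subset argument above, and the lemma follows.
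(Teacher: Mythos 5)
Your proof is correct and follows essentially the same route as the paper's: orthogonality of the retained rows to $\r_I$ gives $\r_I \in \Cc_L$, and the rank of $\B$ fixes $n-k$ (the paper phrases the membership step via a generator matrix $\G$ and an information word $\i$ with $\i\cdot\G = \r_I$, which is equivalent to your kernel argument $\H\r_I^T=\vec{0}$). You are in fact more thorough than the paper, whose short proof omits any verification that the row-subset still satisfies Definition~\ref{def:ldpc}; your inheritance argument (row weights preserved, column weights only decrease, the pairwise-intersection property and low density carry over, with regularity possibly lost) correctly fills that gap.
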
	

\begin{proof}
$\B \subset \Cc_L^\perp$ is obvious since the rows of a parity check matrix are codewords of the dual code.
There exist $n-k$ rows $\b_i \in \B$ such that $\mathrm{rank}(\H) = n-k$.
Hence $\G$ is a $k \times n$ generator matrix of $\Cc_L$ and there exists an information word $\i$ such that $\i\cdot\G = \r_I$.
\end{proof}

Lemma~\ref{lem:paritycheck} can be directly applied to our construction where the rows of $\B$ are taken from different construction methods.

The correctness of our LDPC based Secure Sketch follows from the error correction capability of the constructed code. If $\mathsf{d}_H(\r_I, \r)$ is small enough, the code can correct $\r$ into $\r_I$ and hence reproduce the initial response.

Considering security, assume an adversary gains access to the parity check matrix $\H$. 
Since the correct response is a codeword, the uncertainty is still as large as the number of codewords.

Comparing the memory requirements of the different Secure Sketch constructions, our construction benefits from its sparse parity check matrix.
The helper data generated in the code-offset construction is a vector of length $n$, 
the syndrome construction only needs to store a vector of length $n-k$.
Additionally, both constructions need to store a representation of a code.
Since we do not have additional helper data in our scheme, we only need to store the code.
In contrast to previous constructions we use LDPC codes. They allow an efficient representation due to their sparse parity check matrices, which can be stored by using only one integer pair $(row, column)$ for each $1$.

\subsection{Decoding}
\label{subsec:decoding}

Several decoding algorithms which can be applied to LDPC codes have been introduced.
We use a simple, iterative bitflip procedure  according to \cite{bossert1986hard}, which flips a bit of the received word in each iteration, aiming for finally recovering the sent codeword.  
Let $\u_i$ be the $i$-th unit vector, which has a one at position $i$ and the rest zeros. 
$\mathrm{WT}(\r)$ denotes the Hamming weight of the syndrome of $\r$.
We implemented the algorithm as shown in Algorithm~\ref{alg:bitflip}.
\printalgoIEEE{
	\DontPrintSemicolon
	\KwIn{$\r = \c + \e$}
	\KwOut{decoding result $\hat{c}$.}
	Calculate $w = \mathrm{WT}(\r)$  \; \label{line:dec_1}
	\If{$\mathrm{wt}(w = 0)$} {
		$\hat{c} = r$ \;
		\Return{$\hat{c}$}}
	\For{$i = 1,\dots,n$} { 
		$\epsilon_i = \mathrm{WT}(\r + \u_i^T) + \Delta_i$ } \label{line:values}
	Find $j \in \{1,\dots,n\}$ with $\epsilon_j = \mathrm{min~} \epsilon_i$ \;
	$\r = \r + \u_j$ \;
	Goto 1
	\caption{Bitflip Decoding}
	\label{alg:bitflip}
}

Using multiple readouts, we obtain soft information which can be used by the bitflip algorithm in order to increase the error correction capability. 
Figure~\ref{fig:multiplereadouts} explains this process, using $m=3$ readouts:
We reproduce the PUF response $m$ times and hence obtain responses $\r_1,\dots,\r_m$.
The black circles in the figure highlight the error positions in the $m$ responses.
If an error occurs at position $i$ in some but not all of the responses, the response values differ at position $i$. 
By comparing the elements of the $m$ responses we can identify positions in which an error occurred.
In case all $m$ responses have the same value at position $i$, with very high probability there was no error at this position.  
For all positions $i = 1,\dots, n$ we define the soft information
\begin{align*}
	\Delta_i = 
	\begin{cases}
		\delta_1, & \text{if } \r_{1,i} == \r_{2,i} == \r_{3,i} \\
		\delta_2, & \text{otherwise}
	\end{cases}
\end{align*}
for $\delta_1>\delta_2 > 0$.
These values are used in Algorithm~\ref{alg:bitflip},  Line~\ref{line:values}, in order to increase the values of positions which should not be chosen for a bitflip. 
We have to decode all $m$ responses, so a trade-off between overhead due to readout and error correction capability can be found depending on application and used PUF construction.

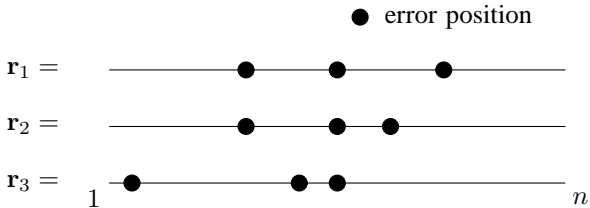
\begin{figure}[hbt]
	\centering
	\begin{tikzpicture}[scale=1.0]
	\draw (0,0) -- (6, 0); 
	\draw (0,0.75) -- (6, 0.75);
	\draw (0,1.5) -- (6, 1.5);
	\draw (-1,1.5) node {$\r_1 = $};
	\draw (-1,0.75) node {$\r_2 = $};
	\draw (-1,0) node {$\r_3 = $};
	\draw (-0.2,-0.2) node {$1$};
	\draw (6.2,-0.2) node {$n$};
	\filldraw (0.3,0) circle (3pt);
	\filldraw (2.5,0) circle (3pt);
	\filldraw (3,0) circle (3pt);
	\filldraw (1.8,0.75) circle (3pt);
	\filldraw (3,0.75) circle (3pt);
	\filldraw (3.7,0.75) circle (3pt);
	\filldraw (1.8,1.5) circle (3pt);
	\filldraw (3.0,1.5) circle (3pt);
	\filldraw (4.4,1.5) circle (3pt);
	
	\filldraw (3.3,2.2) circle (3pt);
	\draw (3.5,2.2) node [right] {error position};
	\end{tikzpicture}
	\caption{Obtaining soft information by using $m = 3$ readouts.}
	\label{fig:multiplereadouts}
\end{figure}

We analyze why decoding works.
Let $\r = \r_I + \e$ be a response extracted from a PUF.
The channel model usually used for PUFs is the \emph{Binary Symmetric Channel} (BSC) with a crossover probability $p$ which is given by the used PUF construction.
For all $\b_I \in \B$ it is
\begin{align*}
\langle \r, \b_i \rangle = \langle \r_I, \b_i \rangle + \langle \e, \b_i \rangle =  \langle \e, \b_i \rangle =: \s_i \in \{0,1\},
\end{align*}
If we calculate $\s_i$ for all $i = 1,\dots,L$, we get the syndrome vector $\s = (s_1,\dots,s_L)$.
It is $\mathrm{WT}(\r) = \mathrm{WT}(\e) = \mathrm{wt}_H(\s) = \sum_{j=1}^{L} = \langle \r, \b_j \rangle $.
Let 
\begin{align}
	\tilde{I}(j)
	\begin{cases}
	   	1, & \text{if } \mathrm{WT}(\r) >  \mathrm{WT}(\r + \u_j) \\
	   	0, & \text{if } \mathrm{WT}(\r) \leq  \mathrm{WT}(\r + \u_j)
	\end{cases}
\end{align}
be an indicator function, where a $1$ indicates an error at position $j = 1,\dots,n$.
We use the improved indicator function
\begin{align}
I(e) = \underset{j}{\mathrm{argmax}} \{\mathrm{WT}(\r) - \mathrm{WT}(\r + \u_j)\}.
\label{eqn:improvedind}
\end{align}
It is clear that
\begin{align*}
\mathrm{WT}(\r) \leq \mathrm{WT}(\r + \u_j) \Rightarrow \mathrm{WT}(\r) - \mathrm{WT}(\r + \u_j) \leq 0 \text{ and} \\
\mathrm{WT}(\r) > \mathrm{WT}(\r + \u_j) \Rightarrow \mathrm{WT}(\r) - \mathrm{WT}(\r + \u_j) > 0\text{.~~~~} 
\end{align*}
Function~\ref{eqn:improvedind} gives us error position $j$ of the smallest $\mathrm{WT}(\r + \u_j)$.
More details about the decoding algorithm can be found in \cite{bossert1986hard} and \cite[Chapter 7]{bossert1999channel}.


\section{Examples}
\label{sec:Examples}

We constructed codes of length 128 and 256 using $m = 3$ readouts.
During the construction process we can influence the rank of the obtained matrix and hence the dimension $k$ of the code. 
Note that the larger $k$, the larger the security level but the smaller the error correction capability.
Due to the flexibility of the code construction, the trade-off between security and error correction performance which best fits to application and used PUF construction can be found.
E.g., the error probabilities of different PUF constructions when reproducing responses given in \cite[Chapter 4.3.4]{maes2013puf} reveal details about the required error correction performance.
Figure~\ref{fig:n128k13} visualizes the block error probability of a constructed LDPC code of length 128 and dimension 13. 
Its parity check matrix has 881 rows.
Figure~\ref{fig:n128k56} shows the block error probability of a constructed LDPC code of same length, but dimension 56.
For this code, the parity check matrix has 349 rows.
The block error probability of a length 256 LDPC code of dimension 106 is visualized in Figure~\ref{fig:n256k106}. 
Its parity check matrix has 555 rows.
For decoding the length 128 (256) codes, we used $\delta_1=10$ and $\delta_2=6$ ($\delta_1=20$ and $\delta_2=12$). 
The block error probabilities of the LDPC codes are compared to BCH codes with similar parameters, since BCH is one of the code classes which are so far most often used in the PUF scenario \cite{maes2012pufky}.
Note that the length 128 LDPC codes outperform the corresponding BCH codes.

We calculated the block error probability for each crossover probability $p$ by
\begin{align}
	P_{Block}(p) = \sum_{i=0}^{t} P(i) \cdot P_{err}(i),
\end{align}
where $i$ is the number of errors, 
\begin{align}
	P(i) = {n \choose i}p^i(1-p)^{n-i}
\end{align}
is the probability of $i$ arbitrary errors in $n$ positions, and $P_{err}(i)$ is the relative amount of wrong decoded vectors $\e$ of weight $i$.
For the constructed LDPC codes, $P_{err}$ was determined via simulations.

\begin{figure}[tbh]
\centering
\includegraphics[width=\linewidth]{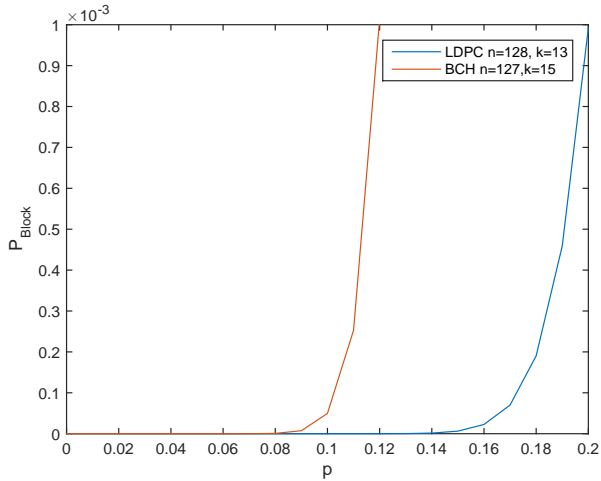}
\caption{Block error probability of a constructed LDPC code of length 128 and dimension 13 in comparison to a BCH code with similar parameters.}
\label{fig:n128k13}
\end{figure}

\begin{figure}[tbh]
	\centering
	\includegraphics[width=\linewidth]{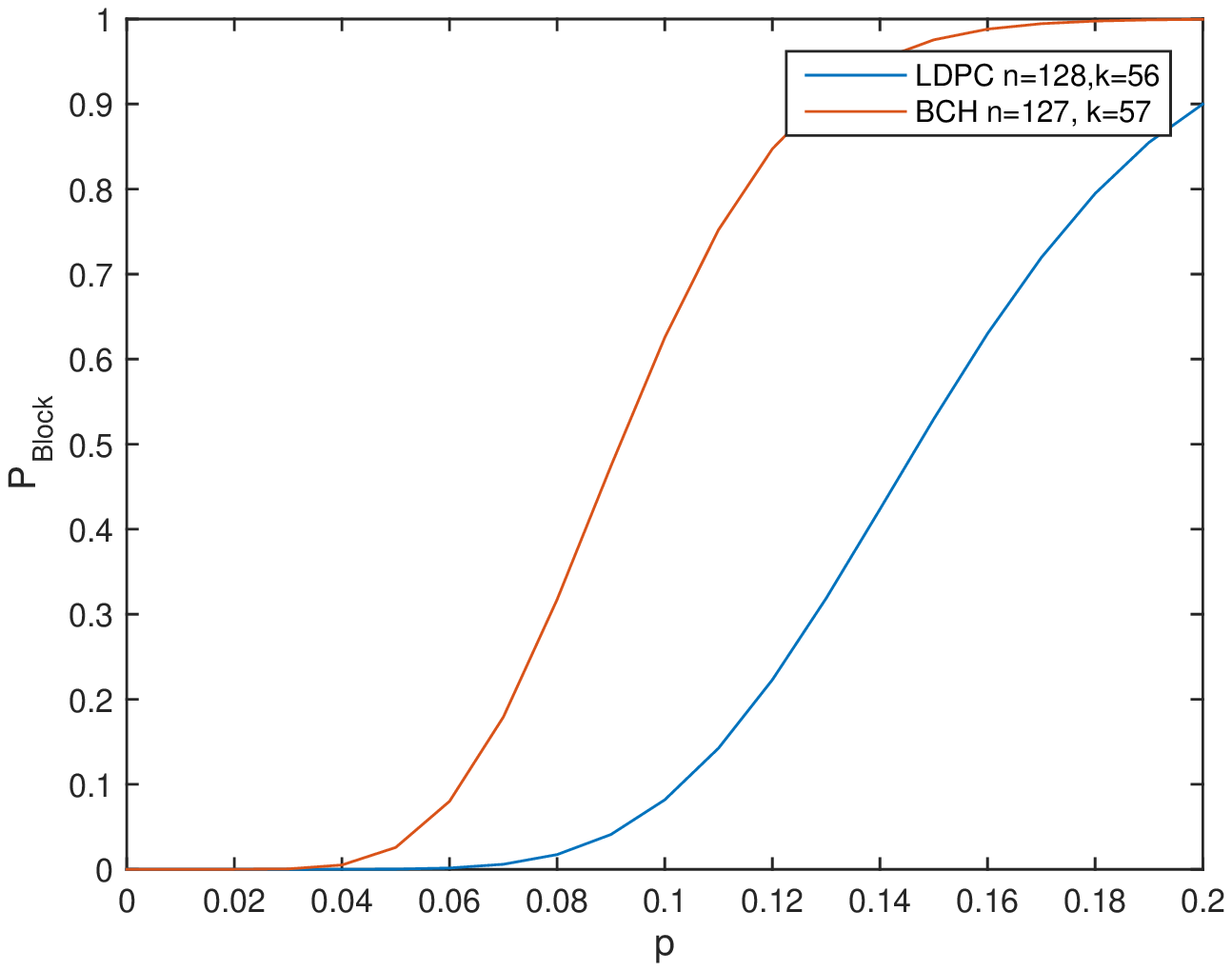}
	\caption{Block error probability of a constructed LDPC code of length 128 and dimension 56 in comparison to a BCH code with similar parameters.}
	\label{fig:n128k56}
\end{figure}

\begin{figure}[tbh]
	\centering
	\includegraphics[width=\linewidth]{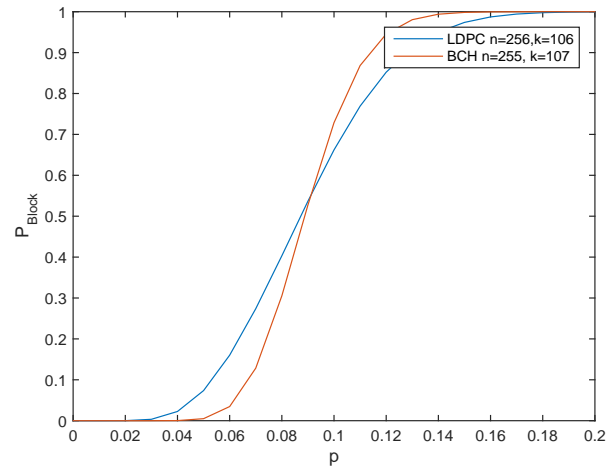}
	\caption{Block error probability of a constructed LDPC code of length 256 and dimension 106 in comparison to a BCH code with similar parameters.}
	\label{fig:n256k106}
\end{figure}
\section{Conclusions}
\label{sec:Conclusion}

We proposed a new Secure Sketch scheme which works by constructing an individual code around an initial PUF response. 
Since we do not need additional helper data, the scheme has a very plain structure.
LDPC codes allow a memory-efficient representation. The bitflip decoding algorithm enables decoding in linear time and can efficiently be implemented in hardware.

The examples presented in Section ~\ref{sec:Examples} show, that the construction results in codes which are suitable to be applied in the PUF scenario.
However, to apply the construction in practical scenarios, codes with larger dimensions have to be constructed to provide cryptographic security.
Also, further security issues have to be deeply investigated.
A further interesting question is, which other code classes can be constructed such that the initial PUF response is a codeword.

\section*{Acknowledgement}
The authors would like to thank Matthias Hiller for valuable discussions.

\bibliographystyle{IEEEtran}
\bibliography{main}


\end{document}